\newtheorem{theorem}{Theorem}
\begin{document}
	\title{Maximizing Indoor Wireless Coverage Using UAVs Equipped with Directional Antennas}
	\author{Hazim Shakhatreh and Abdallah Khreishah
		
		\thanks{Hazim Shakhatreh and Abdallah Khreishah are with the Department of
			Electrical and Computer Engineering, New Jersey Institute of Technology
			(email: \{hms35,abdallah\}@njit.edu)}}
	\maketitle
		\begin{abstract}
Unmanned aerial vehicles (UAVs) can be used to provide wireless coverage during emergency cases where each UAV serves as an aerial wireless base station when the cellular network goes down. They can also be used to supplement the ground base station in order to provide better coverage and higher data rates for the users. In this paper, we aim to maximize the indoor wireless coverage using UAVs equipped with directional antennas. We study the case that the UAVs are using one channel, thus in order to maximize the total indoor wireless coverage, we avoid any overlapping in their coverage volumes. We present two methods to place the UAVs; providing wireless coverage from one building side and from two building sides. In the first method, we utilize circle packing theory to determine the 3-D locations of the UAVs in a way that the total coverage area is maximized. In the second method, we place the UAVs in front of two building sides and efficiently arrange the UAVs in alternating upside-down arrangements. We show that the upside-down arrangements problem can be transformed from 3D to 2D and based on that we present an efficient algorithm to solve the problem. Our results show that the upside-down arrangements of UAVs, can improve the maximum total coverage by 100\% compared to providing wireless coverage from one building side.
		\end{abstract}
		
		\begin{IEEEkeywords}
			Unmanned aerial vehicles, coverage,
			circle packing theory.
		\end{IEEEkeywords}
	\section{Introduction}
	Cells on wheels (COW), are used to provide expanded wireless coverage for short-term demands, when cellular coverage is either minimal, never present or compromised by the disaster~\cite{wikipedia}. UAVs can also be used to provide wireless coverage during emergency cases and special events (such as concerts, indoor sporting events, etc.), when the cellular network service is not available or it is unable to serve users~\cite{bupe2015relief,bor2016efficient,haz2017efficient,hazim2017}. Compared to the COW, the advantage of using UAV-based aerial base stations is their ability to quickly and easily move~\cite{mozaffari2016mobile}. The main disadvantage of using UAVs as aerial base stations is their energy capacity, the UAVs need to return periodically to a charging station for recharging, due to their limited battery capacity. In~\cite{shakhatreh2016continuous}, the authors integrate the recharging requirements into the coverage problem and examine the minimum number of required UAVs for enabling continuous coverage under that setting.
	
	Directional antennas are used to improve the received signal at their associated users, and also reduce interference since other- aerial base stations are targeting/serving other users in other directions~\cite{georgiou2016simultaneous}. The authors in~\cite{mozaffari2016efficient} study the optimal deployment of UAVs equipped with directional antennas, using circle packing theory. The 3D locations of the UAVs are determined in a way that the total coverage area is maximized.
	In~\cite{azari2016optimal}, the authors investigate the problem by characterizing the coverage area for a target outage probability, they show that for the case of Rician fading there exists a unique optimum height that maximizes the coverage area. In~\cite{kalantari2016number}, the authors propose a heuristic algorithm to find the positions of aerial base stations in an area with different user densities, the goal is to find the minimum number of UAVs and their 3D placement so that all the users are served. However, it is assumed that all users are outdoor and the location of each user represented by an outdoor 2D point. In~\cite{orfanus2016self}, the authors use multiple UAVs to design efficient UAV relay networks to support military operations. They describe the tradeoff between connectivity among the UAVs and maximizing the covered area. However, they use the UAVs as wireless relays and do not take into account their mutual interference in downlink channels. In~\cite{kosmerl2014base}, the authors propose a computational method for positioning aerial base stations with the goal of minimizing their number, while fully providing the required bandwidth over the disaster area. It is assumed that overlapping aerial base stations coverage areas are allowed and they use the Inter-Cell Interference Coordination (ICIC) methods to schedule radio resources to avoid inter-cell interference. The authors in~\cite{haz2017efficient,hazim2017} use a single UAV equipped with omnidirectional antenna to provide wireless coverage for indoor users inside a high-rise building, where the objective is to find the 3D location of a UAV that minimizes the total transmit power required to cover the entire high-rise building. In~\cite{haz2017dir}, the authors
	use UAVs equipped with omnidirectional antennas to minimize the number of UAVs required to cover the indoor users.
	
	We summarize our main contributions as follows:
	\begin{itemize}
		\item In order to maximize the indoor wireless coverage, we present two methods to place the UAVs, providing wireless coverage from one building side and from two building sides. In this paper, we study the case that the UAVs are using one channel, thus we avoid any overlapping in their coverage volumes (to avoid interference). In the first method, we utilize circle packing theory to determine the 3-D locations of the UAVs in a way that the total coverage area is maximized. In the second method, we place the UAVs in front of two building sides and efficiently arrange the UAVs in alternating upside-down arrangements.
		\item We show that the upside-down arrangements problem can be transformed from 3D to 2D and based on that we present an efficient algorithm to solve the problem.
		\item We demonstrate through simulation results that the upside-down arrangements of UAVs, can improve the maximum total coverage by 100\% compared to providing wireless coverage from one building side.
	\end{itemize}	
	The rest of this paper is organized as follows. In Section II,
	we describe the system model. In Section III, we show the appropriate placement of UAVs that maximizes the total indoor wireless coverage. Finally, we present our numerical results in Section IV and make concluding remarks in Section V.
	\section{System Model}
	\label{sec:system_model}
	\subsection{System Settings}
	\label{subsec:system settings}
	Consider a $3D$ building, as shown in Figure~\ref{fig1}, where $N$ UAVs must be deployed to maximize wireless coverage to indoor users located within the building. The dimensions of the high-rise building, in the shape of a rectangular prism, be $[0,x_b]$ $\times$ $[0,y_b]$ $\times$ $[0,z_b]$. Let ($x_{k}$, $y_{k}$, $z_{k}$) denote the 3D location of UAV $k$$\in$ $N$, and let ($X_{i}$, $Y_{i}$, $Z_{i}$) denote the location of user $i$.  Also, let $d_{out,i}$ be the distance between the UAV and indoor user $i$, and let $d_{in,i}$ be the distance between the building wall and indoor user $i$. Each UAV uses a directional antenna to provide wireless coverage where the antenna half power beamwidth is $\theta_{B}$. The authors in~\cite{kaya2016wireless} use an outdoor directional antenna to provide wireless coverage for indoor users. They show that the highest RSRP (Reference Signal Received Power) and throughput values are measured along the main beam direction, thus the radiation pattern of a directional antenna is a cone and the indoor volume covered by a UAV is a truncated cone, as shown in Figure~\ref{fig2}. Here, $r_{i}$ is the radius of the circle that is located at $yz$-rectangular side ((0,0,0), (0,0,$z_b$) , (0,$y_b$,$z_b$), (0,$y_b$,0))), $r_{j}$ is the radius of the circle that is located at $yz$-rectangular side (($x_b$,0,0), ($x_b$,0,$z_b$) , ($x_b$,$y_b$,$z_b$), ($x_b$,$y_b$,0)) and $x_b$ is the horizontal width of the building. The volume of a truncated cone is given by:
	\begin{equation*}
	\begin{split}
	V=\frac{1}{3}\pi x_{b} (r^2_{i}+r^2_{j}+r_{i}r_{j})
	\end{split}
	\end{equation*}  
	\begin{figure}[t]
		\centering
		\includegraphics[scale = 0.65]{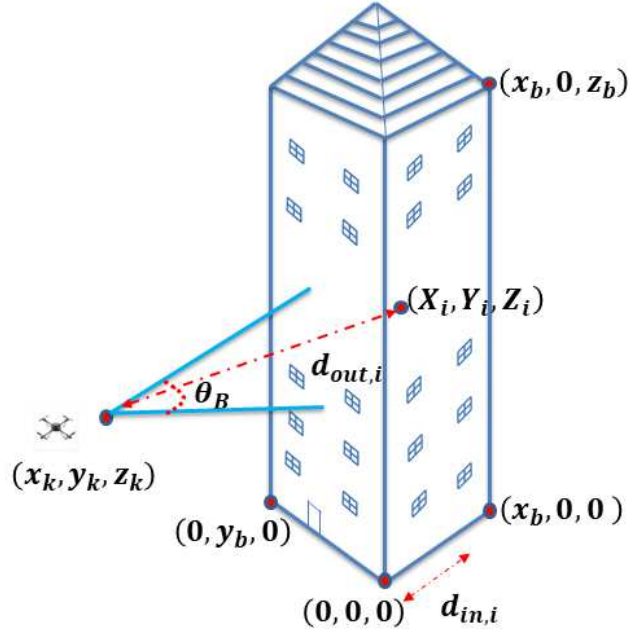}
		\caption{System model}
		\label{fig1}
	\end{figure}
		\subsection{UAV Power Consumption}
		\label{subsec:optimal}	
		In~\cite{feick2015achievable}, the authors show that significant power gains are attainable for indoor users even in rich indoor scattering conditions, if the indoor users use directional antennas. Now, consider a transmission between $k$-th UAV located at
		($x_{k}$, $y_{k}$, $z_{k}$) and $i$-th indoor user located at ($X_{i}$, $Y_{i}$, $Z_{i}$). The received signal power at $i$-th indoor user location can be given by:
		\begin{equation*}
		\begin{split}
		P_{r,ik}(dB)=P_t+G_t+G_r-L_i
		\end{split}
		\end{equation*}  
		where $P_{r,ik}$ is the received signal power, $P_t$ is the transmit power of UAV, $G_t$ is the antenna gain of the UAV. It can be approximated by $G_t\approx\frac{29000}{\theta_B^2}$ with $\theta_{B}$ in degrees~\cite{venugopal2016device,balanis2016antenna} and $G_r$ is the antenna gain of indoor user $i$, which is given by~\cite{feick2015achievable}:
		\begin{equation*}
		\begin{split}
	    G_r(dB)=G_{r,dir}+G_{r,omni}-GRF
		\end{split}
		\end{equation*}  
		where $G_{r,dir}$ and $G_{r,omni}$ are free-space antenna gains of a directive and an omnidirectional antenna respectively and $GRF$ is the decrease in gain advantage of a directive over an omnidirectional antenna, due to the presence of clutter.
		
		 Also, $L_i$ is the path loss which for the Outdoor-Indoor communication is:
		\begin{equation*}
		\begin{split}
		L_i=L_{F}+L_{B}+L_{I}= 
		(w\log_{10}d_{3D,i}+w\log_{10}f_{Ghz}+g_{1})\\+
		(g_{2}+g_{3}(1-\cos\theta_{i})^{2})+(g_{4}d_{2D,i})
		\end{split}
		\end{equation*}
		where $L_{F}$ is the free space path loss, $L_{B}$ is the building penetration loss, and $L_{I}$ is the indoor loss. In the path loss model, we also have
		$w$=20, $g_{1}$=32.4, $g_{2}$=14, $g_{3}$=15, $g_{4}$=0.5~\cite{series2009guidelines} and $f_{Ghz}$ is the carrier frequency.
			\begin{figure*}[t]
				\begin{minipage}[b]{0.5\linewidth}
					\centering
					\includegraphics[width=\textwidth]{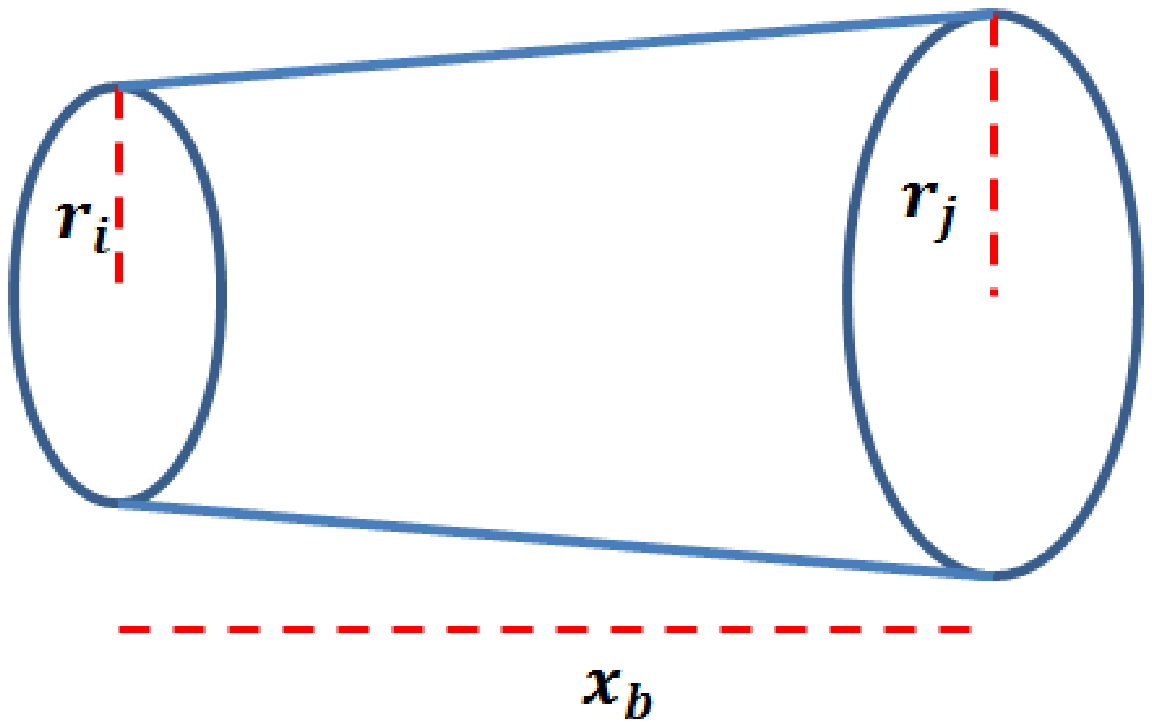}
					\caption{3D Dimensions of the truncated cone
					}
					\label{fig2}
				\end{minipage}
				\hspace{2cm}
				\begin{minipage}[b]{0.4\linewidth}
					\centering
					\includegraphics[width=\textwidth]{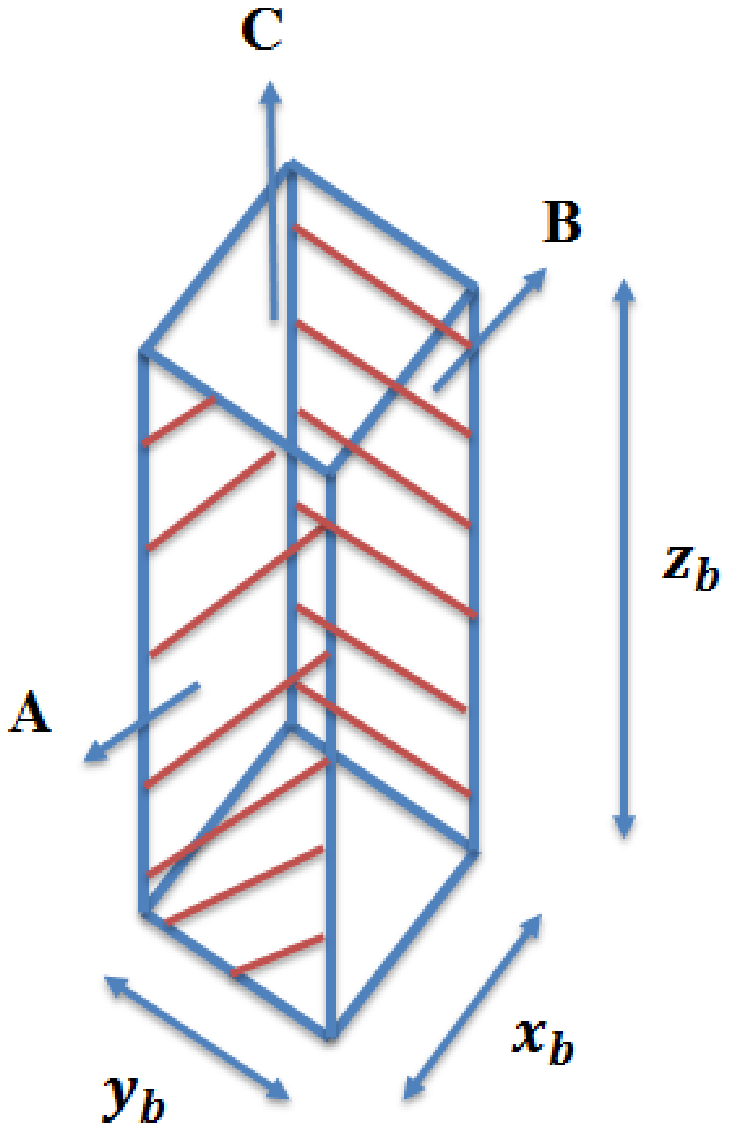}
					\caption{Building sides
					}
					\label{fig3}
				\end{minipage}
			\end{figure*}
		\subsection{Placement of UAVs } 
		Choosing the appropriate placement of UAVs will be a
		critical issue when we aim to maximize the indoor wireless coverage. In this paper, we assume that we can place the UAVs in front of building sides $A$, $B$ and above the building $C$ as shown Figure~\ref{fig3}. We also assume that the UAVs are using one channel. In this section, we demonstrate why avoiding the overlapping between UAV's coverage volumes will strengthen the total indoor wireless coverage. 
		\begin{figure*}[!t]
			\centering
			\includegraphics[scale=0.32]{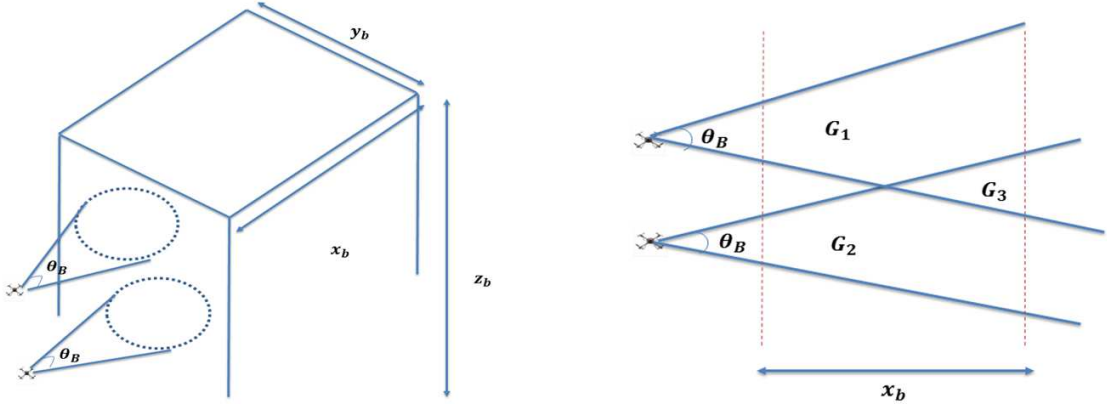}
			\caption{Placing two UAVs in front of building side A}
			\label{fig4}
		\end{figure*}
		\begin{figure*}[!t]
			\centering
			\includegraphics[scale=0.32]{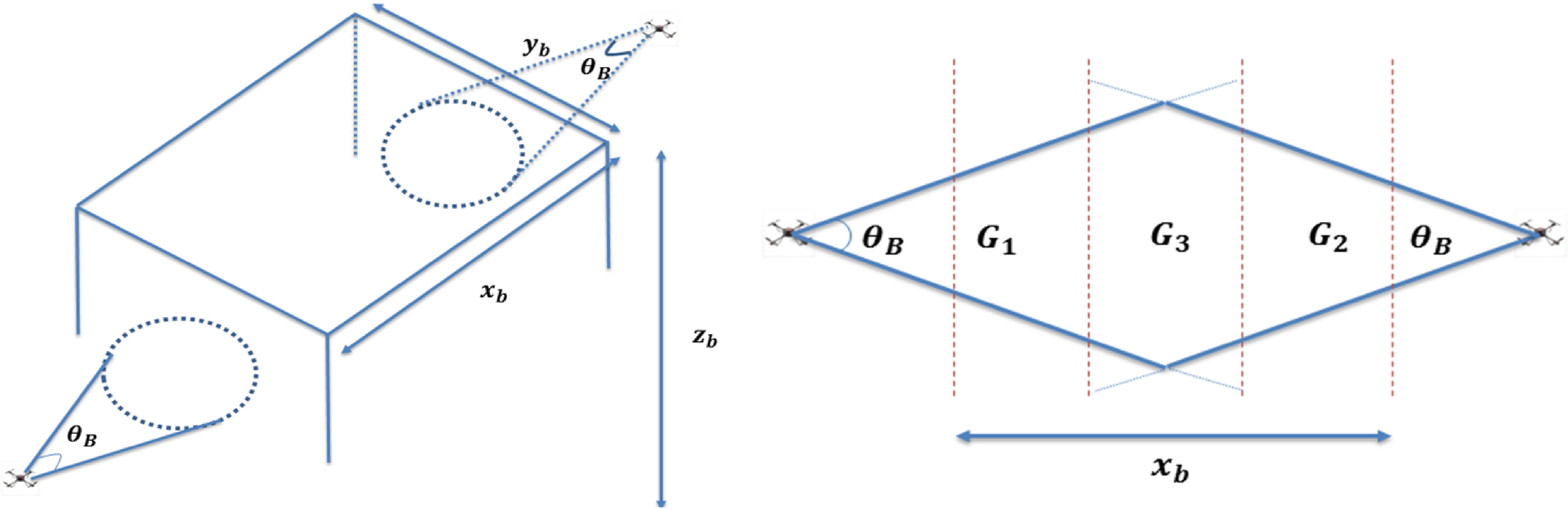}
			\caption{Placing two UAVs in front of two building sides A and B}
			\label{fig5}
		\end{figure*}
		\subsubsection{Overlapping between UAV's coverage volumes is allowed}
		 Now, when we place two UAVs in front of building sides $A$ as shown in Figure~\ref{fig4} (the UAVs have different z-coordinates and same x- and y- coordinates), the indoor users located in $G_1$'s and $G_2$'s locations will have high $SINR$. On the other hand, the indoor users located in $G_3$'s location will have low $SINR$. This is because the dependency of SINR on the location of indoor user. Similarly, when we place two UAVs in front of two building sides $A$ and $B$ as shown in Figure~\ref{fig5} (the UAVs have different x-coordinates and same y- and z- coordinates), the indoor users located in $G_1$'s and $G_2$'s locations will have high $SINR$. On the other hand, the indoor users located in $G_3$'s location will have low $SINR$. In Figure~\ref{fig6} (the UAVs have same y-coordinates and same x- and z- coordinates), when we place one UAV in front of building side A and
		 one UAV above the building C, the indoor users located in $G_1$'s and $G_2$'s locations will have high $SINR$. On the other hand, the indoor users located in $G_3$'s location will have low $SINR$. From the previous examples, we can conclude that allowing the UAVs coverage volumes to overlap will result in that some users are not satisfied. In the next section, we place the UAVs in a way that maximizes the total coverage, and avoids any overlapping in their coverage volumes.
		 \begin{figure*}[!t]
		 	\centering
		 	\includegraphics[scale=0.33]{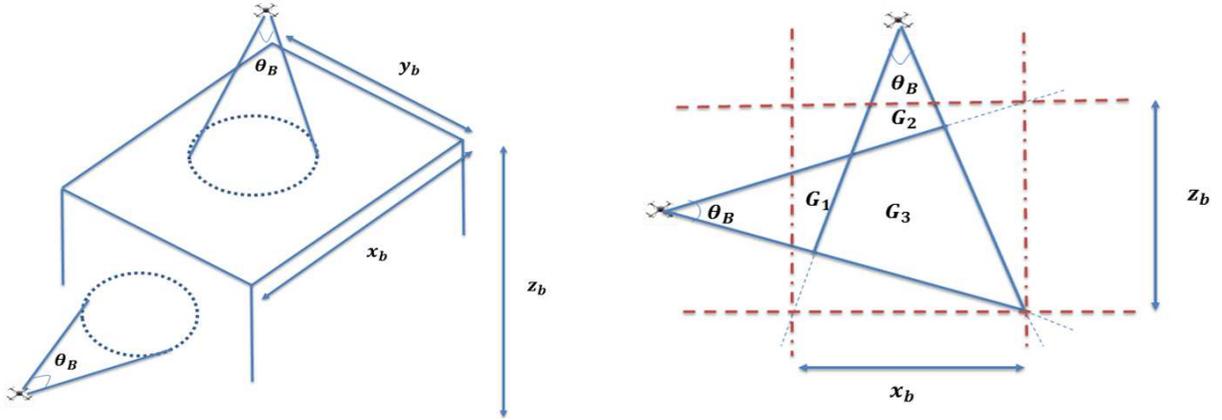}
		 	\caption{Placing one UAV in front of building side A and
		 		one UAV above the building C}
		 	\label{fig6}
		 \end{figure*}
		 \begin{figure*}[!t]
		 	\centering
		 	\includegraphics[scale=0.33]{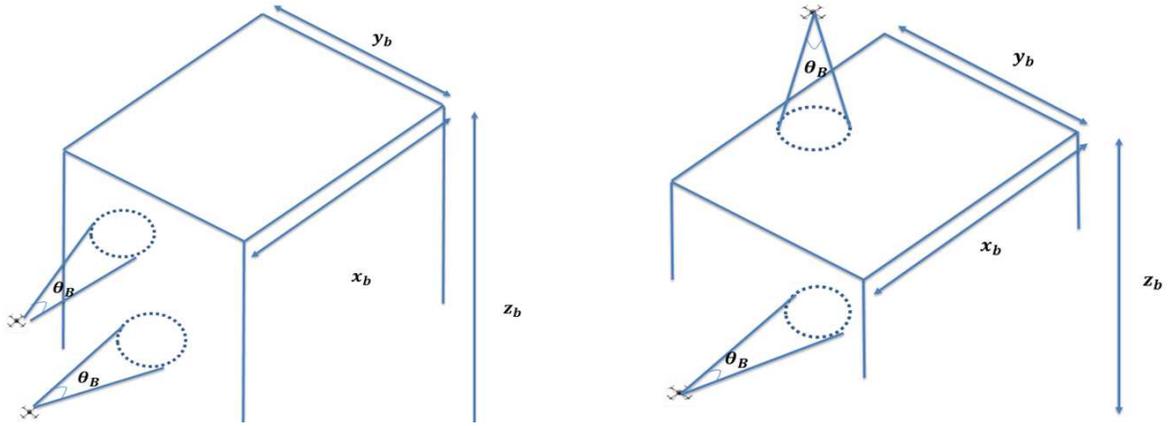}
		 	\caption{UAVs with small antenna half power beamwidth $\theta_{B}$}
		 	\label{fig7}
		 \end{figure*}
		 \subsubsection{Overlapping between UAV's coverage volumes is not allowed}
		 In Figure~\ref{fig7}, we avoid the overlapping between UAV's coverage volumes by using UAVs with small antenna half power beamwidths $\theta_{B}$. Actually, this is impractical way to cover the building, due to the high number of UAVs required to cover the building. In Figure~\ref{fig8}, we place the UAVs in front of two building sides and efficiently arrange the UAVs in alternating upside-down arrangements. We can notice that this method will maximize the indoor wireless coverage where the uncovered holes are minimized and the overlapping between UAV's coverage volumes is not allowed.   
			\section{Maximizing Indoor Wireless Coverage}
			In this section, the UAVs are assumed to be symmetric having the same transmit power, the same horizontal location $x_{k}$, the same channel and the same antenna half power beamwidth $\theta_{B}$. We show two methods to place the UAVs in a way that tries to maximize the total coverage, and avoids any overlapping in their coverage volumes.
	\subsection{Providing Wireless Coverage from one building side}
	
	 In this method, we place all UAVs in front of one building side (side $A$, side $B$ or side $C$). The objective is to determine the three-dimensional location of each UAV $k$$\in$ $N$ in a way that the total covered volume is maximized. Now, consider that we place the UAVs in front of building side $A$, then the projection of UAV's coverage on the building side $B$ is a circle as shown in Figure~\ref{fig9}. Our problem can be formulated as:
		\begin{equation*}
		\begin{split}
		\max~~ |N|\star\frac{1}{3}\star\pi\star x_{b}\star (r^2_{i}+r^2_{j}+r_{i}r_{j})~~~~~~~~~~~~~~~~~~~~~\\
		subject~to~~~~~~~~~~~~~~~~~~~~~~~~~~~~~~~~~~~~~~~~~~~~~~~~~~~~~~~~~~~~\\
		\sqrt{(y_k-y_q)^2+(z_k-z_q)^2}\geq 2r_{j},~~k\neq q\in N\\
		z_b-(z_{k}+r_{j})\geq 0,~~k\in N~~~~~~~~~~~~~~~~~~~~~\\
		(z_{k}-r_{j})\geq 0,~~k\in N~~~~~~~~~~~~~~~~~~~~~\\
		y_b-(y_{k}+r_{j})\geq 0,~~k\in N~~~~~~~~~~~~~~~~~~~~~\\
		(y_{k}-r_{j})\geq 0,~~k\in N~~~~~~~~~~~~~~~~~~~~~\\
		\end{split}
		\end{equation*}
	
	The objective is to maximize the indoor wireless coverage (covered volume). Constraint set (1) guarantees that truncated cones cannot overlap each other. Constraint sets (2-5) ensure that UAV $k$ should not cover outside the $3D$ building, see Figure~\ref{fig9}. We model this problem by utilizing the well-known problem, circle packing problem. In this problem, $N$ circles should be packed inside a given surface such that the packing density is maximized and no overlapping occurs~\cite{birgin2005optimizing}, note that the surface in our problem is a rectangle. 
	
    The authors of~\cite{birgin2005optimizing} tackle this problem by solving a number of decision problems. The decision problem is:\\
    \\
    \textit{Given $N$ circles of radius $r_j$ and a rectangle of dimension $d_1\times d_2$, whether is it possible to locate all the circles into the rectangle or not.}\\
    \begin{figure*}[!t]
    	\centering
    	\includegraphics[scale=0.32]{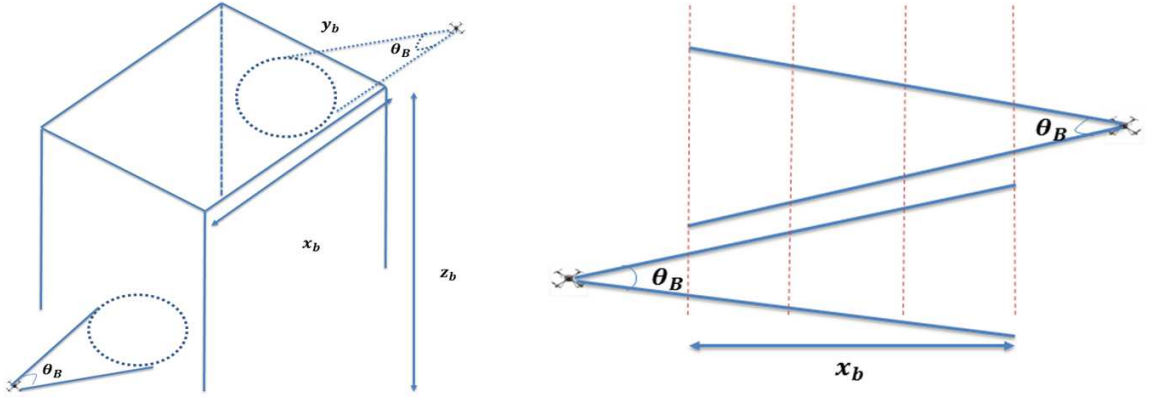}
    	\caption{UAVs in alternating upside-down arrangements}
    	\label{fig8}
    \end{figure*}
    \begin{figure}[!t]
    	\centering
    	\includegraphics[scale = 0.85]{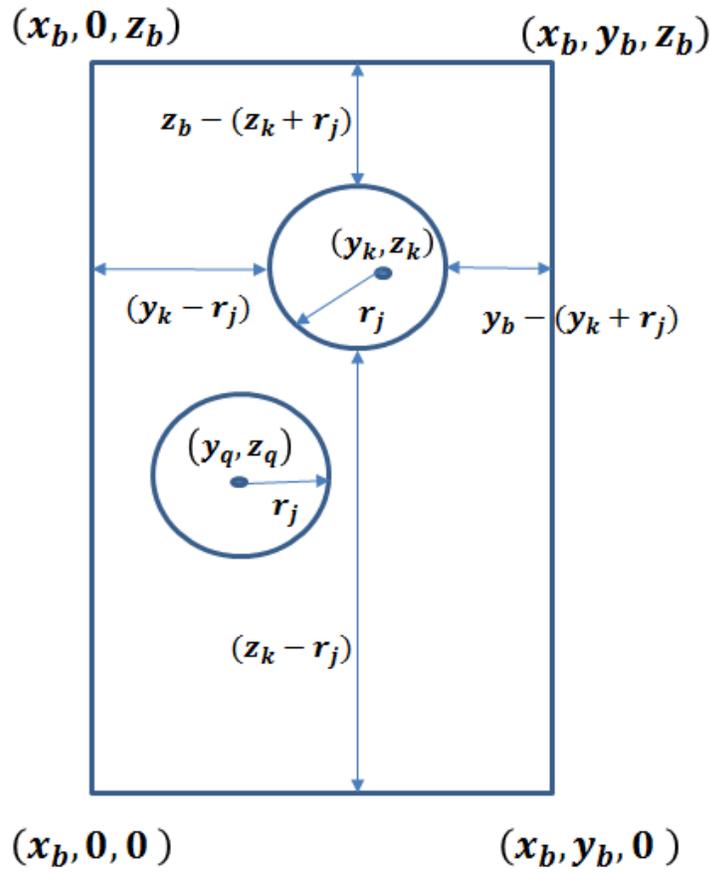}
    	\caption{Circle packing in a rectangle}
    	\label{fig9}
    \end{figure}
    
    In~\cite{birgin2005optimizing}, the authors introduce a nonlinear model for this problem. Finding the answer for the decision problem will depend on finding the global minimizer of a nonconvex and nonlinear optimization problem. In each decision problem, they investigate the feasibility of packing $N$ identical circles. If this is feasible, $N$ is incremented by one and the decision problem is solved again. The algorithm will stop when the decision problem yields an infeasible packing~\cite{hifi2009literature}. The pseudo code of the algorithm is shown in Algorithm 1. In the next section, we utilize the two building sides to maximize the indoor wireless coverage. This will allow us to extend the indoor wireless coverage compared with providing wireless coverage from one building side, because the holes induced by the cones of the UAVs in one side can be filled by the cones induced by the UAVs in the other side without causing overlap among the two sets of cones.
    \begin{algorithm}
    	\begin{algorithmic}
    		\STATE 1: $N$ $\longleftarrow$ 1 
    		\STATE 2: Solve the decision problem for $N$ circles
    		\STATE 3: If Answer $=$ YES
    		\STATE 4: \textbf{Then} $N\longleftarrow N+1$
    		\STATE 5: \textbf{Return} to step 2
    		\STATE 6: If Answer $=$ NO
    		\STATE 7: $n\longleftarrow N-1$
    		\STATE 8: \textbf{End}
    		\STATE 9: \textbf{Output} $n$
    	\end{algorithmic}
    	\caption{Circle packing in a rectangle}
    \end{algorithm}
		\subsection{Providing Wireless Coverage from two building sides}	
		In this method, we place the UAVs in front of two building sides (side $A$ and side $B$) and
		efficiently arrange the UAVs in alternating upside-down arrangements (see Figures~\ref{fig10} and ~\ref{fig11}). In Theorem 1, we find the horizontal location of the UAV $x_{UAV}$ that guarantees the upside-down arrangements of the truncated cones. In Theorem 2, we prove that if the truncated cones do not intersect in 3D, then the circles do not intersect in building sides (A and B), and vice versa. In Theorem 3, we prove that if we maximize the percentage of covered area of building sides (A and B), then we maximize the percentage of covered volume of building, and vice versa. These theorems help us to transform the geometric problem from 3D to 2D and present an efficient algorithm that maximizes the indoor wireless coverage.
		\begin{theorem}
			The horizontal location of the UAV $x_{UAV}$ that guarantees the upside-down arrangements of the truncated cones will be equal to $0.7071x_b$ regardless of the antenna half power beamwidth angle $\theta_B$.
		\end{theorem}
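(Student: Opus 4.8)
The plan is to translate the pictorial description of the alternating upside‑down arrangement (see Figure~\ref{fig8}) into a single algebraic condition on $r_i$, $r_j$ and $x_b$, and then solve for $x_{UAV}$. First I would fix coordinates with building side $A$ in the plane $x=0$ and side $B$ in the plane $x=x_b$, and let $x_{UAV}$ be the common distance from each UAV to the building side it faces. The radiation pattern is a cone with apex at the antenna and half‑angle $\theta_B/2$, so the radius of its cross‑section at distance $s$ from the apex is $s\tan(\theta_B/2)$; consequently a UAV in front of side $A$ casts on side $A$ a circle of radius
\begin{equation*}
r_i=x_{UAV}\tan(\theta_B/2)
\end{equation*}
and on side $B$ a circle of radius
\begin{equation*}
r_j=(x_{UAV}+x_b)\tan(\theta_B/2),
\end{equation*}
and symmetrically for a UAV in front of side $B$. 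The only feature of the factor $\tan(\theta_B/2)$ that matters is that it is common to $r_i$ and $r_j$, so that the ratio
\begin{equation*}
\frac{r_i}{r_j}=\frac{x_{UAV}}{x_{UAV}+x_b}
\end{equation*}
is independent of $\theta_B$.

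Next I would make precise what ``guarantees the upside‑down arrangement'' means. On each building side the configuration is a square grid of mutually tangent large circles of radius $r_j$ --- the far‑face footprints of the UAVs facing the opposite side --- with, dropped into each curvilinear hole bounded by four tangent large circles, one small circle of radius $r_i$, the near‑face footprint of a UAV facing this side. The large and small families neither overlap nor leave avoidable slack exactly when the small circle inscribes the hole, i.e.\ when $r_i$ equals the inradius of that hole, which for four mutually tangent circles of radius $r_j$ centered at the corners of a square equals $(\sqrt{2}-1)\,r_j$. Hence the defining condition of the arrangement is $r_i=(\sqrt{2}-1)\,r_j$. Substituting the ratio computed above yields $x_{UAV}=(\sqrt{2}-1)(x_{UAV}+x_b)$, so $(2-\sqrt{2})\,x_{UAV}=(\sqrt{2}-1)\,x_b$, and therefore
\begin{equation*}
x_{UAV}=\frac{\sqrt{2}-1}{2-\sqrt{2}}\,x_b=\frac{1}{\sqrt{2}}\,x_b\approx 0.7071\,x_b .
\end{equation*}
Since $\tan(\theta_B/2)$ was cancelled in forming $r_i/r_j$, this value does not depend on the half power beamwidth $\theta_B$, which is the assertion of the theorem.

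Finally, to confirm the arrangement is realizable (so the statement is not vacuous), I would verify the 3D non‑overlap of a side‑$A$ frustum and a side‑$B$ frustum whose footprint circles are tangent on each face: at horizontal coordinate $x\in[0,x_b]$ their cross‑sectional radii are $(x_{UAV}+x)\tan(\theta_B/2)$ and $(x_{UAV}+x_b-x)\tan(\theta_B/2)$, whose sum is the constant $r_i+r_j$; and with the value of $x_{UAV}$ above the in‑plane distance between the two cone axes is exactly $r_j\sqrt{2}=r_i+r_j$, so the two frusta stay tangent but interior‑disjoint throughout the depth of the building. The main obstacle in the whole argument is the middle step: correctly recognizing the snug‑fit condition $r_i=(\sqrt{2}-1)r_j$ as the precise meaning of the upside‑down arrangement, since the paper fixes it only through figures; once that is in hand the rest is one line of algebra and the independence of $\theta_B$ is immediate.
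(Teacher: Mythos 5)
Your proof is correct and follows essentially the same route as the paper: both reduce the upside-down arrangement to the snug-fit condition $r_i=(\sqrt{2}-1)r_j$ from the geometry of the square cell, combine it with the similar-cones ratio $r_i/r_j=x_{UAV}/(x_{UAV}+x_b)$, and solve for $x_{UAV}=x_b/\sqrt{2}$. Your two additions --- deriving the ratio explicitly from $\tan(\theta_B/2)$ so the independence of $\theta_B$ is visible rather than implicit, and checking that the tangent footprints imply tangent (interior-disjoint) frusta in 3D --- are welcome but not a different argument; the paper simply asserts the ratio as its equation (1) and defers the 3D non-overlap to its Theorem 2.
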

		\begin{proof}
			The radius of the smaller circular face $r_i$ is given by:
			
			\begin{equation}
			r_i=r_j \dfrac{x_{UAV}}{x_b+x_{UAV}}
			\end{equation}
			Now, we divide the building sides $A$ and $B$ to square cells (as shown in Figures~\ref{fig10} and~\ref{fig11}), the large circle in Figure~\ref{fig10} 
			and the small circle in Figure~\ref{fig11} will represent the projections of UAV's coverage on building sides $A$ and $B$ when the UAV is placed in front of building side $B$. Similarly, the four small circles quarters in Figure~\ref{fig10} and
			the four large circles quarters in Figure~\ref{fig11} will represent the projections of UAVs coverage on building sides $A$ and $B$ when the UAV is placed in front of building side $A$. From Figures~\ref{fig10} and ~\ref{fig11}, the diagonal of the square cell is:
			\begin{equation*}
			 D=2r_j+2r_i
			 \end{equation*}
			 where $r_j$ is the radius of the larger circular face and $r_i$ is the radius of the smaller circular face. By applying the pythagorean’s theorem, we get:
			
			$4r^2_j+4r^2_j=(2r_j+2r_i)^2$ $\Longrightarrow$ $\sqrt{8}r_j=2r_j+2r_i$ $\Longrightarrow$
			\begin{equation}
			r_i=\dfrac{\sqrt{8}-2}{2}r_j=\gamma r_j
			\end{equation}
			
			From equations (1) and (2), we get:\\
			$\dfrac{x_{UAV}}{x_b+x_{UAV}}=\dfrac{\sqrt{8}-2}{2}$ $\Longrightarrow$ $2x_{UAV}=x_b(\sqrt{8}-2)+x_{UAV}(\sqrt{8}-2)$
			$\Longrightarrow$\\ $x_{UAV}=x_b\dfrac{(\sqrt{8}-2)}{(4-\sqrt{8})}=0.7071x_b$
			\qedhere
		\end{proof}
		Thus, to guarantee the upside-down arrangements of the truncated cones, we must place the UAVs at horizontal distance equals to $0.7071x_b$. Theorems 2 and 3 help us to transform the geometric problem from 3D to 2D and present an efficient algorithm that maximizes the indoor wireless coverage.	
		\begin{figure*}[t]
			\begin{minipage}[b]{0.45\linewidth}
				\centering
				\includegraphics[width=\textwidth]{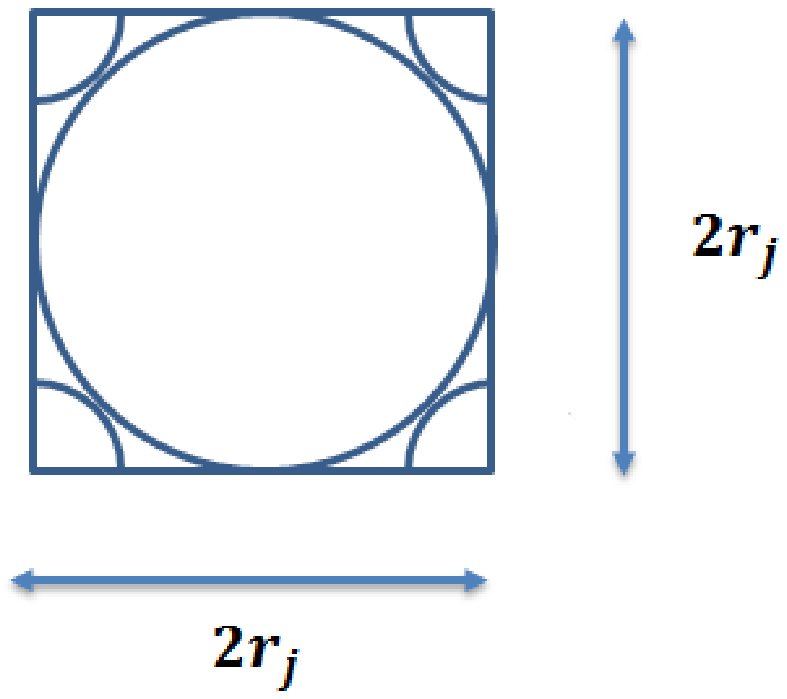}
				\caption{The square cell in side A
				}
				\label{fig10}
			\end{minipage}
			\hspace{1.5cm}
			\begin{minipage}[b]{0.43\linewidth}
				\centering
				\includegraphics[width=\textwidth]{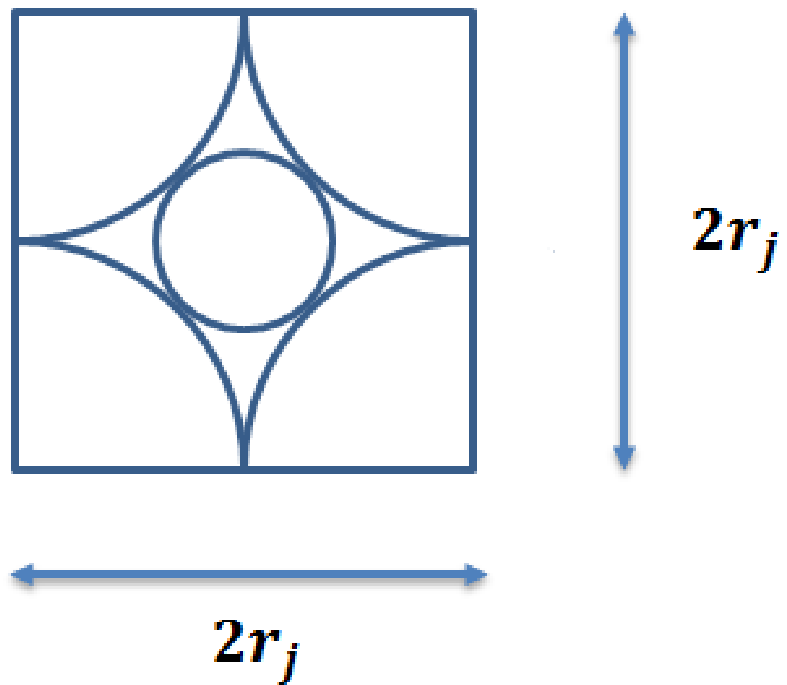}
				\caption{The square cell in side B
				}
				\label{fig11}
			\end{minipage}
		\end{figure*}
		\begin{figure*}[t]
			\begin{minipage}[b]{0.3\linewidth}
				\centering
				\includegraphics[width=\textwidth]{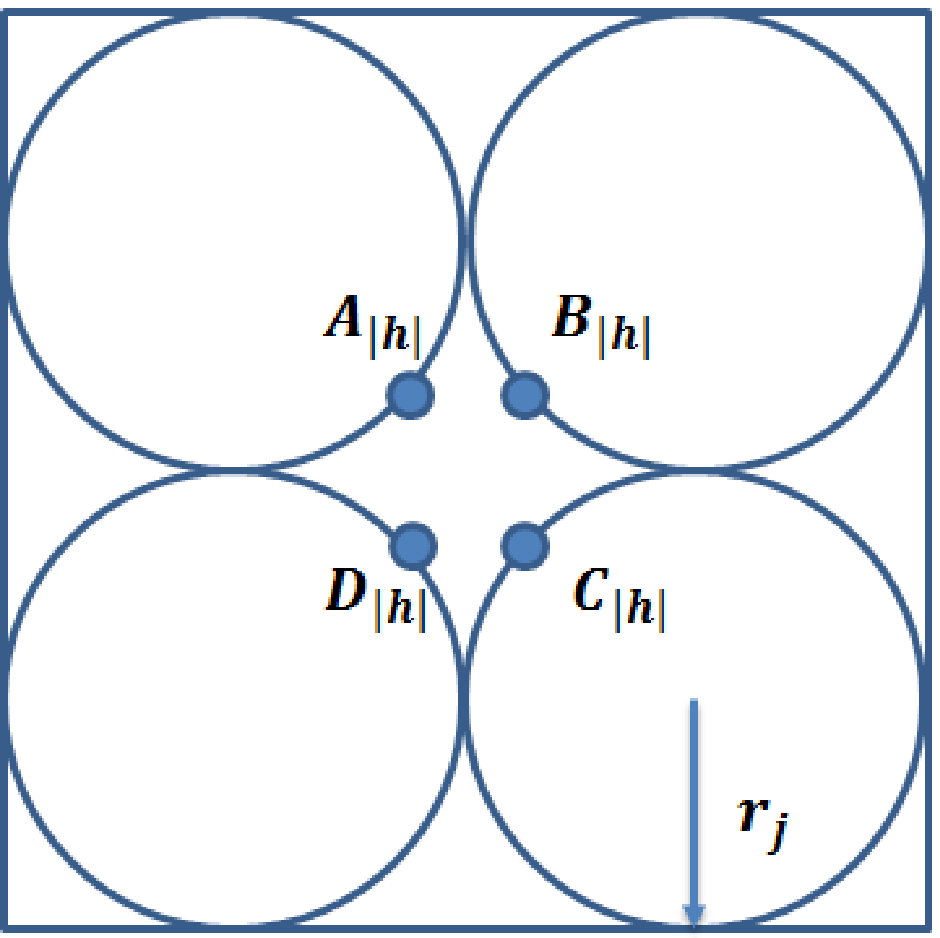}
				\caption{Four circles (with radius $r_j$) in building side $A$
				}
				\label{fig12}
			\end{minipage}
			\hspace{4cm}
			\begin{minipage}[b]{0.3\linewidth}
				\centering
				\includegraphics[width=\textwidth]{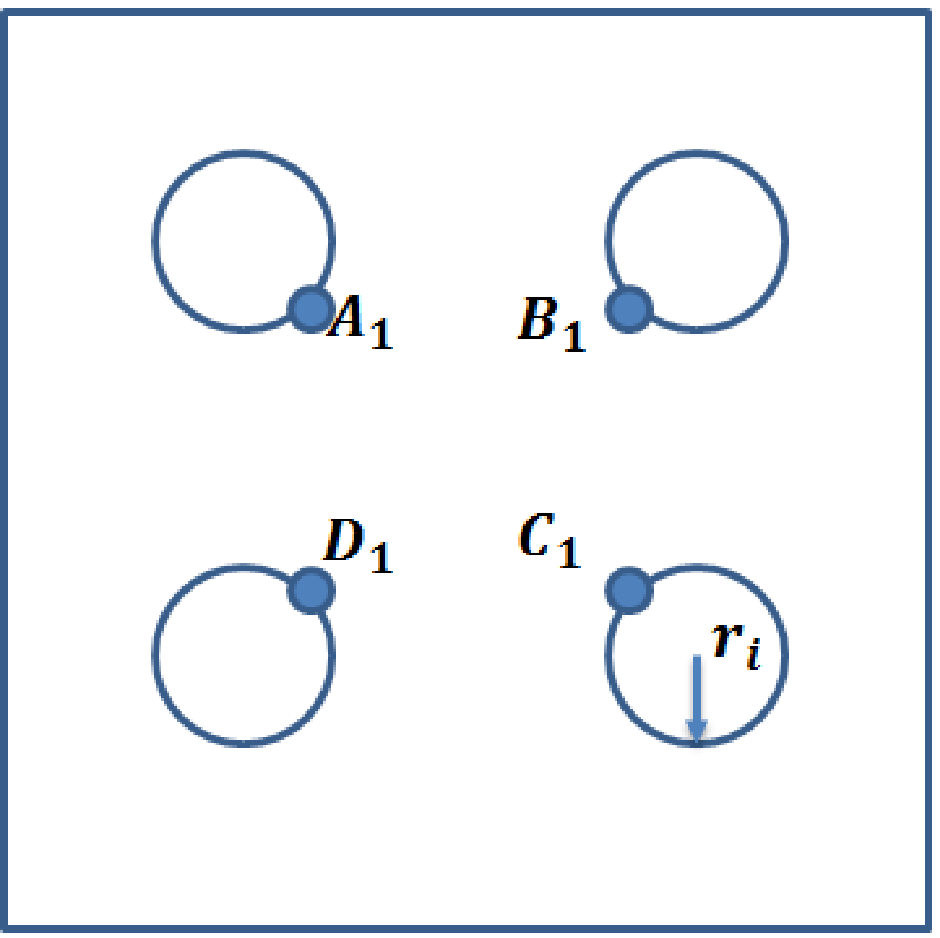}
				\caption{Four circles (with radius $r_i$) in building side $B$
				}
				\label{fig13}
			\end{minipage}
		\end{figure*}
		\begin{theorem}
			The truncated cones do not intersect in 3D iff The circles do not intersect in building sides (A and B).
		\end{theorem}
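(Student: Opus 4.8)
The plan is to slice the configuration by the planes $\{x=t\}$, $t\in[0,x_b]$, turning the 3D question into a one-parameter family of 2D disk-versus-disk tests, and then to notice that the tightest slice is always one of the two building faces. Fix any two of the truncated cones, $C_1$ and $C_2$, each generated by a UAV in front of side $A$ or side $B$. Since every UAV sits at the common horizontal coordinate $x_{UAV}$ and points horizontally, the axis of each $C_m$ is parallel to the $x$-axis; hence $C_m\cap\{x=t\}$ is a disk $D_m(t)$ with a \emph{fixed} center $p_m=(y_m,z_m)$ and a radius $\rho_m(t)$ that is \emph{affine} in $t$ (the radius of a cone grows linearly along its axis, and $x$ is an affine coordinate along that axis). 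The values $\rho_m(0)$ and $\rho_m(x_b)$ are exactly the radii of the circles $C_m$ cuts on sides $A$ and $B$; if $C_m$ faces side $A$ then $\rho_m$ increases in $t$ with $\rho_m(0)/\rho_m(x_b)=x_{UAV}/(x_b+x_{UAV})$ (cf. equation (1)), and if it faces side $B$ then $\rho_m$ decreases symmetrically.

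Next I would rewrite non-overlap slice by slice. A truncated cone is convex and both $C_1,C_2$ span the whole slab $0\le x\le x_b$, so $C_1$ and $C_2$ have disjoint interiors if and only if $D_1(t)$ and $D_2(t)$ have disjoint interiors for every $t\in[0,x_b]$, i.e. $\|p_1-p_2\|\ge\rho_1(t)+\rho_2(t)$ for all such $t$. Now $t\mapsto\rho_1(t)+\rho_2(t)$ is affine on $[0,x_b]$, so it attains its maximum at an endpoint; therefore the whole family of conditions is equivalent to the two conditions $\|p_1-p_2\|\ge\rho_1(0)+\rho_2(0)$ and $\|p_1-p_2\|\ge\rho_1(x_b)+\rho_2(x_b)$ — and these say precisely that the circles of $C_1,C_2$ on side $A$ are disjoint and that the circles on side $B$ are disjoint. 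This establishes the equivalence for one pair; the claim for the full arrangement follows by quantifying over all pairs of cones (and, for the ``$\Rightarrow$'' direction, this is in any case immediate because the circles on each face are literally the corresponding slices of the cones).

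The crux — and essentially the only step with content — is the reduction to the two end faces, which is exactly why the model's assumption that all UAVs lie at the same horizontal coordinate and aim horizontally is essential: it forces the slice centers $p_m$ to be independent of $t$, so the gap $t\mapsto\|p_1-p_2\|-\rho_1(t)-\rho_2(t)$ is affine and its sign on $[0,x_b]$ is pinned down by its two endpoint values. (Were the axes tilted, $p_m(t)$ would still be affine but no longer constant, $\|p_1(t)-p_2(t)\|$ would be merely convex, and non-overlap at both faces would not force non-overlap in between.) I would also record the harmless sanity check that when $C_1$ faces side $A$ and $C_2$ faces side $B$ the sum $\rho_1(t)+\rho_2(t)$ is in fact constant in $t$, so its two endpoint conditions coincide. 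The leftover points — convexity of a frustum, nonemptiness of every slice $D_m(t)$, and the bookkeeping of open versus closed disks in the definition of ``overlap'' — are routine.
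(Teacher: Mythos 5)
Your proof is correct, and it takes a genuinely different route from the paper's. For the forward direction both arguments coincide: the face circles are literally slices of the cones, so disjoint cones force disjoint circles. For the converse the paper argues constructively and only by example: starting from four non-intersecting circles on side $A$ (Figures 12--13), it connects corresponding boundary points, draws the intermediate circles through them, and asserts that the resulting truncated cone avoids the others --- a plausibility sketch tied to one specific configuration rather than a general proof. Your slicing argument replaces this with a clean quantitative reduction: because every axis is parallel to the $x$-axis and all UAVs share the horizontal offset, each slice $D_m(t)$ has a $t$-independent center and an affine radius, so the pairwise gap $\|p_1-p_2\|-\rho_1(t)-\rho_2(t)$ is affine in $t$ and its nonnegativity on $[0,x_b]$ is decided by the two endpoints, which are exactly the circle conditions on sides $A$ and $B$. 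This buys you rigor, generality (any pair of cones, facing either side), and an explicit identification of which modeling assumptions make the 3D-to-2D reduction work --- you correctly note that tilted axes would break the argument, something the paper's figure-based construction leaves invisible. The only cosmetic caveat is that your argument leans on the cones' axes being perpendicular to the faces and the UAVs sharing one horizontal coordinate; these are indeed assumed in Section III of the paper, so nothing is missing.
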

		\begin{proof}
			First, we prove that if the truncated cones do not intersect in $3D$, then the circles do not intersect in building sides (A and B). Assume that we have a set of truncated cones $G=\{1,2,...,N\}$ and they do not intersect in 3D space. Each truncated cone $n \in G$ can be represented by a number of $2D$ circles $\{c_{1n}, c_{2n},..., c_{|h|n}\}$, where $|h|$ is the height of the truncated cone, $c_{1n}$ is the smaller circular face and $c_{|h|n}$ is the larger circular face. It is obvious that if the $|G|$ truncated cones do not intersect in $3D$ space then the smaller and larger circular faces do not intersect in building sides ($A$ and $B$).
			
			Second, we prove that if the circles do not intersect in building sides (A and B), then the truncated cones do not intersect in $3D$. Assume that four circles (with large radius $r_j$) not intersect in building side $A$ (see Figure~\ref{fig12}), then the circles (with small radius $r_i$) in building side $B$ will appear as shown Figure~\ref{fig13}. Now, we need to do two steps: 1) Connect the lines between these points ($A_{|h|}$ with $A_1$, $B_{|h|}$ with $B_1$, $C_{|h|}$ with $C_1$ and $D_{|h|}$ with $D_1$ ). 2) Draw circles that pass through four points $A_k$, $B_k$, $C_k$ and $D_k$ where $k\in h$. After these two steps, the circles that have been drawn in step two will represent a truncated cone that his circular bases do not intersect with the four circles in building sides ($A$ and $B$). Also, the truncated cones do not intersect in 3D space.
			\qedhere
		\end{proof}
\begin{theorem}
	We maximize the percentage of covered area of building sides (A and B) iff We maximize the percentage of covered volume of building
\end{theorem}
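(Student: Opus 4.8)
The plan is to express both quantities—the percentage of covered area on sides $A$ and $B$, and the percentage of covered volume of the building—in closed form in terms of the same packing data, and then show that each is a strictly increasing function of the other. By the symmetry assumptions (all UAVs share the same $x_k = x_{UAV} = 0.7071x_b$, the same $\theta_B$, hence the same pair of radii $r_j$ for the larger face and $r_i = \gamma r_j$ for the smaller face, with $\gamma = (\sqrt{8}-2)/2$ from Theorem~1), every truncated cone is a congruent copy of a single reference cone. So the whole configuration is determined by how many cones are packed and where their axes sit, but the key point is that area and volume both scale with the \emph{count} of cones in a way that is independent of placement.

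First I would set up the bookkeeping. Suppose the chosen arrangement places $N_A$ UAVs in front of side $A$ and $N_B$ in front of side $B$. A UAV in front of side $A$ produces a circle of radius $r_j$ on side $A$ (its large face) and a circle of radius $r_i$ on side $B$ (its small face, the projection through the building), and symmetrically for a UAV in front of side $B$. Since overlaps are forbidden (and by Theorem~2 the 3D non-intersection is equivalent to the 2D non-intersection of all these circles), the total covered area on the two sides is
\begin{equation*}
\text{Area}_{\text{cov}} = (N_A+N_B)\,\pi (r_j^2 + r_i^2),
\end{equation*}
while the total building wall area available is $2\,x_b\,z_b$ (sides $A$ and $B$). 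Meanwhile each truncated cone has volume $\frac{1}{3}\pi x_b (r_i^2 + r_j^2 + r_i r_j)$ by the formula in Section~II, so the total covered volume is
\begin{equation*}
\text{Vol}_{\text{cov}} = (N_A+N_B)\,\tfrac{1}{3}\pi x_b (r_i^2 + r_j^2 + r_i r_j),
\end{equation*}
against a total building volume $x_b y_b z_b$. Dividing, the covered-volume percentage equals the covered-area percentage times the fixed constant
\begin{equation*}
\kappa = \frac{x_b \cdot \tfrac{1}{3}(r_i^2+r_j^2+r_ir_j)}{(r_i^2+r_j^2)} \cdot \frac{2 x_b z_b}{x_b y_b z_b},
\end{equation*}
which depends only on $x_b, y_b$ and the ratio $\gamma$ (hence only on $\theta_B$ and the building geometry), not on the packing. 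Therefore the two percentages are related by a strictly positive multiplicative constant, so one is maximized exactly when the other is, which is the claim. I would then note the same argument runs in either direction, giving the ``iff.''

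The step I expect to need the most care is justifying that the covered volume really is $(N_A+N_B)$ times the single-cone volume with \emph{no loss}, i.e. that no part of any cone pokes outside the building and no two cones overlap inside it. The first is handled by the analogues of constraints (2)--(5) from the one-sided formulation (each circle, of either radius, must lie inside the $y$--$z$ face), and the non-overlap is exactly Theorem~2 translated back to 3D. A secondary subtlety is that the alternating upside-down arrangement uses the square-cell tiling of Theorem~1, so I should confirm that the area accounting ``$\pi(r_j^2+r_i^2)$ per UAV'' is not double-counting the small and large faces—it is not, because the small face of an $A$-UAV lands on side $B$ in a region disjoint (by Theorem~2) from every circle contributed by the $B$-UAVs and from every other $A$-UAV's small face. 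Once that disjointness is in hand, the area and volume sums are genuine sums over congruent pieces and the proportionality constant $\kappa$ falls out immediately.
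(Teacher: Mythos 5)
Your proposal is correct and follows essentially the same route as the paper: both the covered-area and covered-volume percentages are expressed as the number of congruent truncated cones times a fixed positive constant (the paper writes this count as $2\lfloor (y_b z_b)/(4r_j^2)\rfloor$ for its square-cell tiling, you write it as $N_A+N_B$), so the two percentages are positive constant multiples of one another and are maximized simultaneously. The only slip is in your constant $\kappa$ --- the total area of sides $A$ and $B$ is $2y_bz_b$, not $2x_bz_b$, and a stray $x_b$ survives in the first factor --- but this does not affect the argument, since all that matters is that $\kappa$ is a fixed positive constant independent of the packing.
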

\begin{proof}
	First, we divide the building sides $A$ and $B$ to square cells (as shown in Figures~\ref{fig10} and~\ref{fig11}). The percentage of covered volume is given by:

	\begin{equation}
	 V=\dfrac{\lfloor\dfrac{(y_b\ast z_b)}{4r^2_j}\rfloor\ast 2 \ast (\frac{\pi}{3}\ast x_b\ast (r^2_i+r_ir_j+r^2_j))}{(x_b\ast y_b\ast z_b)}
	 \end{equation}
	Where:\\
	$\lfloor\dfrac{(y_b\ast z_b)}{4r^2_j}\rfloor$: the number of square cells in the building side.\\
	2: the number of truncated cones in the square cell (see Figures~\ref{fig7} and~\ref{fig8}).\\
    $\frac{\pi}{3}\ast x_b\ast (r^2_i+r_ir_j+r^2_j)$: the volume of truncated cone.\\
    $(x_b\ast y_b\ast z_b)$: the volume of the building.
    
   Now, from equations (2) and (3), we get:
   \begin{equation}
   \begin{split}
    V=\dfrac{\lfloor\dfrac{(y_b\ast z_b)}{4r^2_j}\rfloor \ast (\frac{2\pi}{3})\ast (\gamma^2+\gamma+1)r^2_j}{(y_b\ast z_b)}
    =
    K_1\lfloor\dfrac{(y_b\ast z_b)}{4r^2_j}\rfloor r^2_j 
       \end{split}
   \end{equation}
   Where:\\
   $K_1=\dfrac{(\frac{2\pi}{3})(\gamma^2+\gamma+1)}{(y_b\ast z_b)}$\\
   
   The percentage of covered area of building sides ($A$ and $B$) is given by:
   \begin{equation}
   \begin{split}
   W= \dfrac{\lfloor\dfrac{(y_b\ast z_b)}{4r^2_j}\rfloor\ast (\pi r^2_i+\pi r^2_j)}{(y_b\ast z_b)} + \dfrac{\lfloor\dfrac{(y_b\ast z_b)}{4r^2_j}\rfloor\ast (\pi r^2_i+\pi r^2_j)}{(y_b\ast z_b)}=  \dfrac{\lfloor\dfrac{(y_b\ast z_b)}{4r^2_j}\rfloor\ast 2\pi( r^2_i+ r^2_j)}{(y_b\ast z_b)}
      \end{split} 
   \end{equation}
   Now, from equations (2) and (5), we get:
   \begin{equation}
   W=\dfrac{\lfloor\dfrac{(y_b\ast z_b)}{4r^2_j}\rfloor\ast 2\pi(\gamma^2+1)r^2_j}{(y_b\ast z_b)}=K_2\lfloor\dfrac{(y_b\ast z_b)}{4r^2_j}\rfloor r^2_j~~~
   \end{equation}
   Where:\\
   $K_2=\dfrac{(2\pi)(\gamma^2+1)}{(y_b\ast z_b)}$\\
   	To prove that maximizing the percentage of covered volume of building is equivalent to maximizing the percentage of covered area of building sides (A and B). From equations (4) and (6), maximizing $V=K_1\lfloor\dfrac{(y_b\ast z_b)}{4r^2_j}\rfloor r^2_j$ is equivalent to maximizing $K_2\lfloor\dfrac{(y_b\ast z_b)}{4r^2_j}\rfloor r^2_j$ where $K_1$ and $K_2$ are constants.
   	
   	To prove that maximizing the percentage of covered area of building sides (A and B) is equivalent to maximizing the percentage of covered volume of building. From equations (4) and (6), maximizing $W=K_2\lfloor\dfrac{(y_b\ast z_b)}{4r^2_j}\rfloor r^2_j$ is equivalent to maximizing $K_1\lfloor\dfrac{(y_b\ast z_b)}{4r^2_j}\rfloor r^2_j$ where $K_1$ and $K_2$ are constants.
	\qedhere
	\end{proof}
	
In Algorithm 2, we maximize the covered volume by placing the UAVs
in alternating upside-down arrangements. First, we find the horizontal distance between the building and the UAVs $x_{UAV}=0.7071x_b$ (see Theorem 1) that guarantees the alternating upside-down arrangements. Then, we divide the building sides $A$ and $B$ to square cells and place one UAV in front of the square cell. In steps (8-16), we find the $3D$ locations of UAVs that cover the building from side B. On the other hand, steps (17-25) find the $3D$ locations of UAVs that cover the building from side A. Finally, the algorithm will output total number of UAVs and the total covered volume.
\begin{algorithm}
	\begin{algorithmic}
		\STATE 1: \textbf{Input:}
		\STATE 2: The dimensions of building $x_b$, $y_b$ and $z_b$
		\STATE 3: The radius of the larger circular face $r_j$
	    \STATE 4: \textbf{Initialization:}
	    \STATE 5: $r_i=\dfrac{\sqrt{8}-2}{2}r_j$
	    \STATE 6: $x_{UAV}=0.7071x_b$
	    \STATE 7: $u=q=0$
	    \STATE 8: The $3D$ locations of UAVs that cover the building from 
	    \STATE ~~~~side B are given by:
	    \STATE 9: \textbf{For} $k_1=1:\lfloor\dfrac{y_b}{2r_j}\rfloor$
	    \STATE 10: ~~~~~\textbf{For} $s_1=1:\lfloor\dfrac{z_b}{2r_j}\rfloor$
	    \STATE 11: ~~~~~~~~~~ $u=u+1$
	    \STATE 12: ~~~~~~~~~~ $x_q=x_{UAV}+x_b$
	    \STATE 13: ~~~~~~~~~~ $y_u=(2k_1-1)r_j$
	    \STATE 14: ~~~~~~~~~~ $z_u=(2s_1-1)r_j$
	    \STATE 15: ~~~~~\textbf{End}
	    \STATE 16: \textbf{End}
	    \STATE 17: The $3D$ locations of UAVs that cover the building from 
	    \STATE~~~~~side A are given by:
	    \STATE 18: \textbf{For} $k_2=1:\lfloor\dfrac{y_b}{3r_j}\rfloor$
	    \STATE 19: ~~~~~\textbf{For} $s_2=1:\lfloor\dfrac{z_b}{3r_j}\rfloor$
	    \STATE 20: ~~~~~~~~~~ $q=q+1$
	    \STATE 21: ~~~~~~~~~~ $x_q=-x_{UAV}$
	    \STATE 22: ~~~~~~~~~~ $y_q=(2k_2)r_j$
	    \STATE 23: ~~~~~~~~~~ $z_q=(2s_2)r_j$
	    \STATE 24: ~~~~~\textbf{End}
	    \STATE 25: \textbf{End}
		\STATE 26: \textbf{Output:}
		\STATE 27: The number of UAVs$=u+q+2k_1+2s_1$
		\STATE 28: The covered volume=$(u)(2)(\frac{\pi}{3}\ast x_b\ast (r^2_i+r_ir_j+r^2_j))$
	\end{algorithmic}
	\caption{Maximizing Indoor Wireless Coverage Using UAVs}
\end{algorithm}

\section{SIMULATION RESULTS}
Let the dimensions of the building, in the shape of a rectangular prism, be $[0, x_b=30]\times[0, y_b=40]\times[0, zb=60]$. We use three methods to cover the building using UAVs. In the first method, we place all UAVs in front of one building side ($A$ or $B$) (FOBS). In the second method, we place all UAVs above the building ($C$) (ABS). In the third method, we arrange the UAVs in alternating upside-down arrangements (AUDA). For the first and second methods, we utilize the circle packing in a rectangle approach~\cite{PackingProblems} to maximize the covered volume. For the third method, we apply Algorithm 2 to maximize the covered volume. In Figure~\ref{fig14}, we find the maximum total coverage for different antenna half power beamwidth angles $\theta_B$. As can be seen from the simulation results, the maximum total coverage is less than half for the FOBS and ABS methods, this is because providing wireless coverage from one building side will only maximize the covered area of the building side. On the other hand, we improve the maximum total coverage by applying the AUDA, this is because AUDA will allow us to use a higher number of UAVs to provide wireless coverage compared with providing wireless coverage from one building side, as shown in Figure~\ref{fig15}.
  
\begin{figure}[t]
	\centering
	\includegraphics[scale = 0.58]{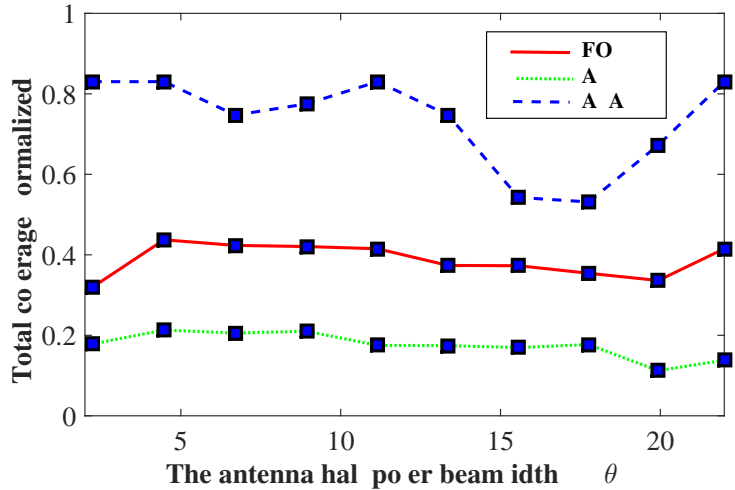}
	\caption{Total coverage vs. $\theta_B$}
	\label{fig14}
\end{figure}
\begin{figure}[t]
	\centering
	\includegraphics[scale = 0.58]{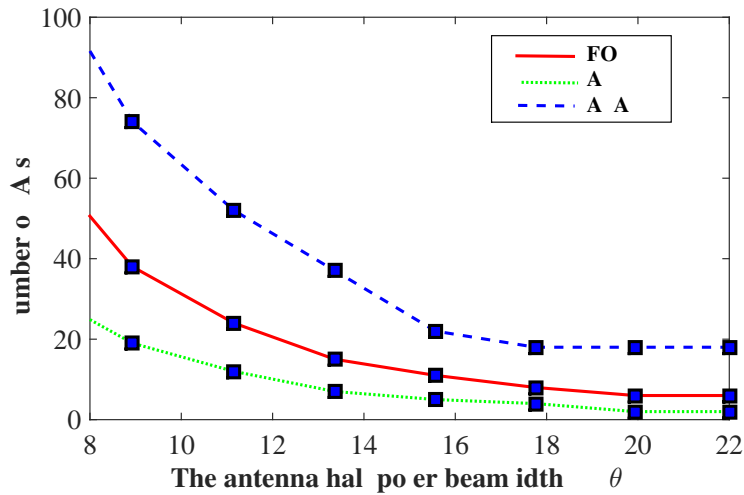}
	\caption{Number of UAVs vs. $\theta_B$}
	\label{fig15}
\end{figure}
In order to provide full wireless coverage for the building, we use UAVs with different channels to cover the holes in the building. In Figure~\ref{fig16}, we find the total number of UAVs required to provide full coverage. As can be seen from the figure, FOBS and ABS need high number of UAVs to guarantee full wireless coverage for the building, due to the irregular shapes of the holes in the building. Here, we can easily specify the number of UAVs required to cover each hole in the building, due to the small projections of the holes in the building side. On the other hand, AUDA needs fewer number of UAVs to provide full wireless coverage, due to the small-regular shapes of the uncovered spaces inside the building. Here, we need only one UAV to cover each hole. In Figure~\ref{fig17}, we find the total transmit power consumed by UAVs when the building is fully covered. Here, we assume that the threshold SNR equals 25dB, the noise power equals -120dBm, the frequency of the channel is 2GHz and the antenna gain of each indoor user is 14.4 dB~\cite{feick2015achievable}. As can be seen from the figure, the total transmit power in all methods is very small, due to the high gain of the directional antennas. Also, we can notice that the total power consumed in FOBS and ABS is higher than that of AUDA. This is because the number of UAVs required to fully cover the building in AUDA is fewer than that for FOBS and ABS.
\begin{figure}[t]
	\centering
	\includegraphics[scale = 0.58]{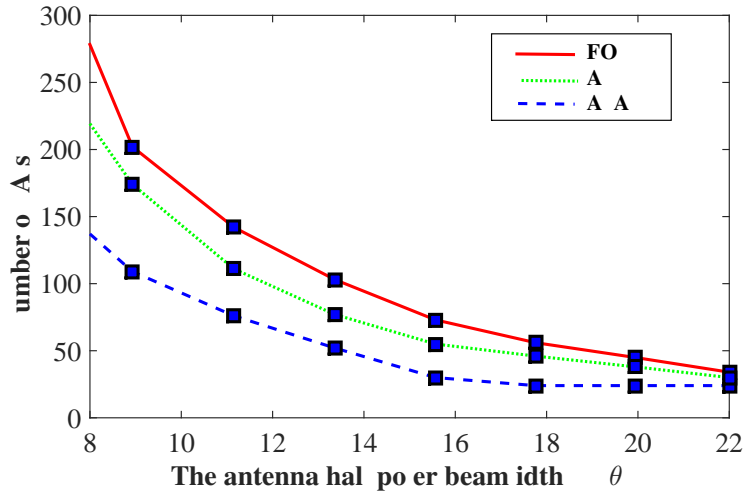}
	\caption{Number of UAVs vs. $\theta_B$}
	\label{fig16}
\end{figure}
\begin{figure}[t]
	\centering
	\includegraphics[scale = 0.58]{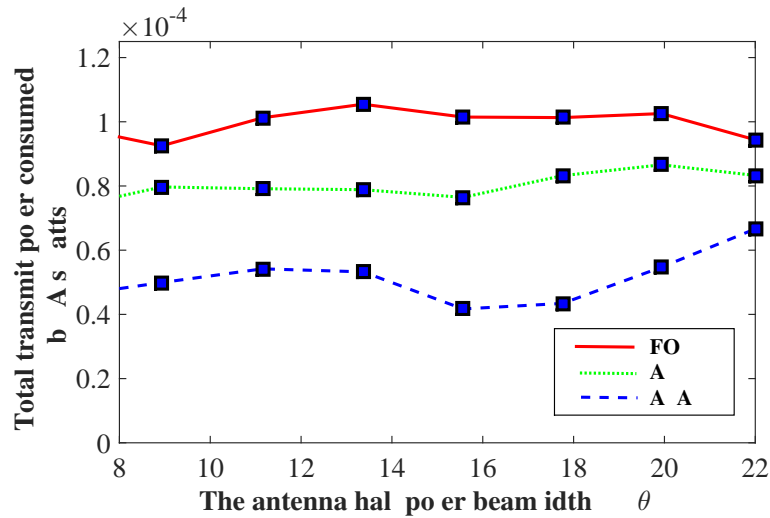}
	\caption{Total transmit power vs. $\theta_B$}
	\label{fig17}
\end{figure}
\section{Conclusion}
Choosing the appropriate placement of UAVs will be a
critical issue when we aim to maximize the indoor wireless coverage. In this paper, we study the case that the UAVs are using one channel, thus in order to maximize the total indoor wireless coverage, we avoid any overlapping in their coverage volumes. We present two methods to place the UAVs; providing wireless coverage from one building side and from two building sides. In the first method, we utilize circle packing theory to determine the 3-D locations of the UAVs in a way that the total coverage area is maximized. In the second method, we place the UAVs in front of two building sides and efficiently arrange the UAVs in alternating upside-down arrangements. We show that the upside-down arrangements problem can be transformed from 3D to 2D and based on that we present an efficient algorithm to solve the problem. Our results show that the upside-down arrangements, can improve the maximum total coverage by 100\% compared to providing wireless coverage from one building side.
\section*{Acknowledgment}
This work was supported in part by the NSF under Grant
CNS-1647170.
	\bibliographystyle{IEEEtran}
	\bibliography{UAVpath}

\begin{thebibliography}{10}
\providecommand{\url}[1]{#1}
\csname url@samestyle\endcsname
\providecommand{\newblock}{\relax}
\providecommand{\bibinfo}[2]{#2}
\providecommand{\BIBentrySTDinterwordspacing}{\spaceskip=0pt\relax}
\providecommand{\BIBentryALTinterwordstretchfactor}{4}
\providecommand{\BIBentryALTinterwordspacing}{\spaceskip=\fontdimen2\font plus
\BIBentryALTinterwordstretchfactor\fontdimen3\font minus
  \fontdimen4\font\relax}
\providecommand{\BIBforeignlanguage}[2]{{%
\expandafter\ifx\csname l@#1\endcsname\relax
\typeout{** WARNING: IEEEtran.bst: No hyphenation pattern has been}%
\typeout{** loaded for the language `#1'. Using the pattern for}%
\typeout{** the default language instead.}%
\else
\language=\csname l@#1\endcsname
\fi
#2}}
\providecommand{\BIBdecl}{\relax}
\BIBdecl

\bibitem{wikipedia}
\BIBentryALTinterwordspacing
wikipedia, ``Mobile cell sites,'' (Accessed on April 2017). [Online].
  Available: \url{https://en.wikipedia.org/wiki/Mobile{\_}cell{\_}sites}
\BIBentrySTDinterwordspacing

\bibitem{bupe2015relief}
P.~Bupe, R.~Haddad, and F.~Rios-Gutierrez, ``Relief and emergency communication
  network based on an autonomous decentralized uav clustering network,'' in
  \emph{SoutheastCon 2015}.\hskip 1em plus 0.5em minus 0.4em\relax IEEE, 2015,
  pp. 1--8.

\bibitem{bor2016efficient}
R.~I. Bor-Yaliniz, A.~El-Keyi, and H.~Yanikomeroglu, ``Efficient 3-d placement
  of an aerial base station in next generation cellular networks,'' in
  \emph{Communications (ICC), 2016 IEEE International Conference on}.\hskip 1em
  plus 0.5em minus 0.4em\relax IEEE, 2016, pp. 1--5.

\bibitem{haz2017efficient}
H.~Shakhatreh, A.~Khreishah, and B.~Ji, ``Providing wireless coverage to
  high-rise buildings using uavs,'' in \emph{(IEEE International Conference on
  Communications, IEEE ICC 2017 (accepted)}.\hskip 1em plus 0.5em minus
  0.4em\relax IEEE, 2017.

\bibitem{hazim2017}
H.~Shakhatreh, A.~Khreishah, A.~Alsarhan, I.~Khalil, A.~Sawalmeh, and
  O.~Noor~Shamsiah, ``Efficient 3d placement of a uav using particle swarm
  optimization,'' in \emph{The International Conference on Information and
  Communication Systems (ICICS 2017) (accepted)}.

\bibitem{mozaffari2016mobile}
M.~Mozaffari, W.~Saad, M.~Bennis, and M.~Debbah, ``Mobile internet of things:
  Can uavs provide an energy-efficient mobile architecture?'' \emph{arXiv
  preprint arXiv:1607.02766}, 2016.

\bibitem{shakhatreh2016continuous}
H.~Shakhatreh, A.~Khreishah, J.~Chakareski, H.~B. Salameh, and I.~Khalil, ``On
  the continuous coverage problem for a swarm of uavs,'' in \emph{Sarnoff
  Symposium, 2016 IEEE 37th}.\hskip 1em plus 0.5em minus 0.4em\relax IEEE,
  2016, pp. 130--135.

\bibitem{georgiou2016simultaneous}
O.~Georgiou, ``Simultaneous wireless information and power transfer in cellular
  networks with directional antennas,'' \emph{IEEE Communications Letters},
  2016.

\bibitem{mozaffari2016efficient}
M.~Mozaffari, W.~Saad, M.~Bennis, and M.~Debbah, ``Efficient deployment of
  multiple unmanned aerial vehicles for optimal wireless coverage,'' \emph{IEEE
  Communications Letters}, vol.~20, no.~8, pp. 1647--1650, 2016.

\bibitem{azari2016optimal}
M.~M. Azari, F.~Rosas, K.-C. Chen, and S.~Pollin, ``Optimal uav positioning for
  terrestrial-aerial communication in presence of fading,'' in \emph{Global
  Communications Conference (GLOBECOM), 2016 IEEE}.\hskip 1em plus 0.5em minus
  0.4em\relax IEEE, 2016, pp. 1--7.

\bibitem{kalantari2016number}
E.~Kalantari, H.~Yanikomeroglu, and A.~Yongacoglu, ``On the number and 3d
  placement of drone base stations in wireless cellular networks,'' in
  \emph{Proc. of IEEE Vehicular Technology Conference}, 2016.

\bibitem{orfanus2016self}
D.~Orfanus, E.~P. de~Freitas, and F.~Eliassen, ``Self-organization as a
  supporting paradigm for military uav relay networks,'' \emph{IEEE
  Communications Letters}, vol.~20, no.~4, pp. 804--807, 2016.

\bibitem{kosmerl2014base}
J.~Kosmerl and A.~Vilhar, ``Base stations placement optimization in wireless
  networks for emergency communications,'' in \emph{Communications Workshops
  (ICC), 2014 IEEE International Conference on}.\hskip 1em plus 0.5em minus
  0.4em\relax IEEE, 2014, pp. 200--205.

\bibitem{haz2017dir}
H.~Shakhatreh, A.~Khreishah, and I.~Khalil, ``The indoor mobile coverage
  problem using uavs,'' in \emph{submitted to IEEE Transactions on Wireless
  Communications}.\hskip 1em plus 0.5em minus 0.4em\relax IEEE, 2017.

\bibitem{kaya2016wireless}
A.~{\"O}. Kaya and D.~Calin, ``On the wireless channel characteristics of
  outdoor-to-indoor lte small cells,'' \emph{IEEE Transactions on Wireless
  Communications}, vol.~15, no.~8, pp. 5453--5466, 2016.

\bibitem{feick2015achievable}
R.~Feick, M.~Rodr{\'\i}guez, L.~Ahumada, R.~A. Valenzuela, M.~Derpich, and
  O.~Bahamonde, ``Achievable gains of directional antennas in outdoor-indoor
  propagation environments,'' \emph{IEEE Transactions on Wireless
  Communications}, vol.~14, no.~3, pp. 1447--1456, 2015.

\bibitem{venugopal2016device}
K.~Venugopal, M.~C. Valenti, and R.~W. Heath, ``Device-to-device millimeter
  wave communications: Interference, coverage, rate, and finite topologies,''
  \emph{IEEE Transactions on Wireless Communications}, vol.~15, no.~9, pp.
  6175--6188, 2016.

\bibitem{balanis2016antenna}
C.~A. Balanis, \emph{Antenna theory: analysis and design}.\hskip 1em plus 0.5em
  minus 0.4em\relax John Wiley \& Sons, 2016.

\bibitem{series2009guidelines}
M.~Series, ``Guidelines for evaluation of radio interface technologies for
  imt-advanced,'' \emph{Report ITU}, no. 2135-1, 2009.

\bibitem{birgin2005optimizing}
E.~G. Birgin, J.~Mart{\i}nez, and D.~P. Ronconi, ``Optimizing the packing of
  cylinders into a rectangular container: a nonlinear approach,''
  \emph{European Journal of Operational Research}, vol. 160, no.~1, pp. 19--33,
  2005.

\bibitem{hifi2009literature}
M.~Hifi and R.~M'hallah, ``A literature review on circle and sphere packing
  problems: Models and methodologies,'' \emph{Advances in Operations Research},
  vol. 2009, 2009.

\bibitem{PackingProblems}
\BIBentryALTinterwordspacing
E.~G. Birgin, ``Packing problems,'' (Accessed on April 2017). [Online].
  Available: \url{https://www.ime.usp.br/~egbirgin/packing/}
\BIBentrySTDinterwordspacing

\end{thebibliography}
	
\end{document}